\newtheorem{theorem}{Theorem}
\newtheorem{corollary}{Corollary}
\numberwithin{equation}{section}
\newcolumntype{C}{>{$\displaystyle}c<{$}}
\newcolumntype{L}{>{$\displaystyle}l<{$}}
\title[On Infinite Series of Bessel functions of the First Kind]{On Infinite Series of Bessel functions \\of the First Kind: \\ $\sum_\nu J_{N\nu+p}(x), \sum_\nu (-1)^\nu J_{N\nu+p}(x)$}
\author{Suk Hyun Sung}
\address{Department of Materials Science and Engineering,
University of Michigan, Ann Arbor MI 48105, United States}
\email{sukhsung@umich.edu}
\author{Robert Hovden}
\address{Department of Materials Science and Engineering,
University of Michigan, Ann Arbor MI 48105, United States}
\email{hovden@umich.edu}
\date{October 26, 2022}
\newcommand{\pcos}[1]{\cos{(#1)}}
\newcommand{\psin}[1]{\sin{(#1)}}
\newcommand{\pdelta}[1]{\delta{(#1)}}
\newcommand{\infsum}[1]{\sum_{#1=-\infty}^{\infty}}
\begin{document}

\begin{abstract}
    Infinite series of Bessel function of the first kind, $\sum_\nu^{\pm\infty} J_{N\nu+p}(x)$, $\sum_\nu^{\pm\infty} (-1)^\nu J_{N\nu+p}(x)$, are summed in closed form. These expressions are evaluated by engineering a Dirac comb that selects specific sequences within the Bessel series.
\end{abstract}
\maketitle

\section*{Introduction}
Infinite series of Bessel functions of the first kind in the form $\sum_{2\nu}J_{2\nu}(x)$ and $\sum_{3\nu}J_{3\nu}(x)$ arise in many natural systems. They are of particular interest in condensed matter physics when crystals spontaneously break symmetry~\cite{peierls1979surprises} due to correlated electron effects such as superconductivity, charge density waves, collosal magnetoresistance, and quantum spin liquids. Mathematically, these series can appear when sinusoids exist inside complex exponentials---as described by the Jacobi-Anger relation~\cite{Jacobi_1836, Anger_1855}. Early treatises on Bessel functions by Neumann and Watson~\cite{Neumann_1867,WhittakerWatson, Watson} provide analytic solutions to the alternating series $\sum_{\nu}(-1)^\nu J_{2\nu}(x)$ and $\sum_{\nu}(-1)^\nu J_{2\nu+1}(x)$ which are commonly tabulated~\cite{JeffreyZwillinger,AbramowitzStegun}. However an analytic expressions for $\sum_{\nu}J_{3\nu}(x)$, $\sum_{\nu}J_{3\nu\pm1}(x)$ or more general series $\sum_{\nu} J_{N\nu+p}(x)$ and $\sum_{\nu} (-1)^\nu J_{N\nu+p}(x)$ are not readily available. We show closed form expressions to infinite series of Bessel functions of the first kind exist. The expression is evaluated by engineering a Dirac comb that selects specific sequences within the Bessel series.

To illustrate, we find a closed form expression to the series:
\begin{align*}
    \infsum{\nu} J_{3\nu+p}(x) = \frac{1}{3}\left[1+2\pcos{\frac{x\sqrt{3}}{2}-\frac{2\pi p}{3}}\right]
\end{align*}
{\hfill$\nu, p \in \mathbb{Z};\; x \in \mathbb{C}$}\\

More generally, we find an expression to all series in the class:
\begin{align*}
    \infsum{\nu} J_{N\nu+p}(x) &= \frac{1}{N}\sum_{q=0}^{N-1} e^{ix\psin{2\pi q/N}}e^{-i2\pi pq/N }\\
    \infsum{\nu} (-1)^\nu J_{N\nu+p}(x) &= \frac{1}{N} \sum_{q=0}^{N-1}e^{ix\psin{(2q+1)\pi/N}}e^{-i(2q+1)\pi p/N}
\end{align*}
{\hfill$\nu, p, q \in \mathbb{Z};\; N \in \mathbb{Z}^+;\; x \in \mathbb{C}$}\\

From these theorem's, we tabulate a family of closed analytic forms to infinite series of Bessel functions of the first kind.

\vspace{5mm}
\section{Evaluation of Series: $\displaystyle\infsum{\nu} J_{N\nu+p}(x)$}
    
\begin{theorem}
Infinite series of Bessel functions of the first kind of the form $\sum_\nu^{\pm\infty} J_{N\nu+p}(x)$ where $\nu, p, q \in \mathbb{Z}, \ N \in \mathbb{Z}^+$ and $x\in\mathbb{C}$ have following closed expression:
\begin{align}
    \infsum{\nu} J_{N\nu+p}(x) = \frac{1}{N}\sum_{q=0}^{N-1} e^{-i2\pi pq/N }e^{ix\psin{2\pi q/N}}
\end{align}
\end{theorem}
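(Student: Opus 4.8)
The plan is to evaluate the right-hand side directly and show that it collapses onto the left-hand side, using the Jacobi--Anger expansion together with a roots-of-unity (finite Fourier) filter --- the discrete incarnation of the Dirac comb advertised in the introduction. First I would recall the generating identity
\[
    \pexp{ix\psin{\theta}} = \infsum{n} J_n(x)\, e^{in\theta},
\]
which the paper attributes to Jacobi and Anger and which holds for every $x \in \mathbb{C}$ and real $\theta$. Substituting $\theta = 2\pi q/N$ rewrites each factor $e^{ix\psin{2\pi q/N}}$ appearing in the claimed closed form as an infinite Bessel series indexed by $n$.

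Next I would insert this into the right-hand side and interchange the finite sum over $q \in \{0,\dots,N-1\}$ with the infinite sum over $n$, producing
\[
    \frac{1}{N}\sum_{q=0}^{N-1} e^{-i2\pi pq/N}\, \infsum{n} J_n(x)\, e^{i2\pi nq/N} = \infsum{n} J_n(x)\,\frac{1}{N}\sum_{q=0}^{N-1} e^{i2\pi q(n-p)/N}.
\]
The inner $q$-sum is a geometric series in the $N$-th root of unity $\omega = e^{2\pi i/N}$, and by orthogonality it equals $1$ when $n \equiv p \pmod{N}$ and vanishes otherwise. This is precisely the \emph{comb} that sieves the series: only the indices $n = N\nu + p$ survive, and the remaining sum is $\sum_\nu J_{N\nu+p}(x)$, the desired left-hand side. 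Equivalently, one may view the same mechanism through the integral representation $J_n(x) = \frac{1}{2\pi}\int_{-\pi}^{\pi} e^{ix\sin\theta - in\theta}\,d\theta$, where $\sum_\nu e^{-iN\nu\theta}$ literally becomes a Dirac comb sampling the integrand at $\theta = 2\pi q/N$.

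The main obstacle I anticipate is not algebraic but analytic: justifying the interchange of summation, since $x$ is allowed to be complex. I would control this by the classical growth bound $J_n(x) \sim (x/2)^n/n!$ as $n \to \infty$ (together with $J_{-n}(x) = (-1)^n J_n(x)$), which shows that $\sum_n \lvert J_n(x)\rvert$ converges for every fixed $x \in \mathbb{C}$; absolute convergence then licenses the rearrangement and, in the integral picture, the exchange of summation and integration. With that convergence in hand, every remaining step is a routine finite computation, so I expect the proof to be short once the Jacobi--Anger expansion and the orthogonality of roots of unity are assembled.
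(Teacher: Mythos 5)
Your proof is correct, but it takes a genuinely different and more economical route than the paper. The paper builds an actual distributional Dirac comb: it multiplies the unknown series by $\sum_h\sum_p \delta(k-(h+p/N)a)$, expands the comb as the Fourier series $\frac{1}{a}\sum_m e^{-i2\pi km/a}$, applies Jacobi--Anger, resplits the $m$-sum into residue classes mod $N$, reassembles a second comb, and finally compares coefficients of the two delta lattices, arguing that since the spacing $a$ is arbitrary the identity holds for all arguments. You replace all of that with the discrete orthogonality relation
\begin{align*}
\frac{1}{N}\sum_{q=0}^{N-1} e^{i2\pi q(n-p)/N} \;=\; \begin{cases} 1, & n\equiv p \pmod N,\\ 0, & \text{otherwise},\end{cases}
\end{align*}
applied after expanding each $e^{ix\psin{2\pi q/N}}$ by Jacobi--Anger and interchanging the finite $q$-sum with the bilateral $n$-sum. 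Conceptually the two arguments are the same sieve --- the paper's comb in $k$-space plays exactly the role of your roots-of-unity filter --- but your version is shorter, avoids delta functions entirely, and is the more rigorous of the two: the paper's concluding step (``the equivalent Dirac lattices on each side can be disregarded'') is an identification of coefficients of distributions that is left informal, whereas your interchange of summations is fully justified by the absolute convergence $\sum_n |J_n(x)| < \infty$ coming from $|J_n(x)| = O(|x/2|^n/n!)$ and $J_{-n}(x)=(-1)^nJ_n(x)$. The only thing the paper's presentation buys is the physical picture of sampling on a reciprocal lattice; mathematically your argument is the cleaner one, and it adapts verbatim to Theorem 2 by taking $\theta=(2q+1)\pi/N$ instead of $2\pi q/N$.
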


\begin{proof}
Consider the following series of evenly spaced delta functions (i.e. a Dirac comb) on an infinite series of Bessel functions:\begin{align}
    f(k) = \infsum{h} \sum_{p=0}^{N-1} \delta(k-(h+\frac{p}{N})a) \infsum{\nu} J_{N\nu+p}(kA) \label{eq:f_initial}
\end{align}
where $\nu, h, p \in \mathbb{Z}$; $N \in \mathbb{Z}^+$; $k, a\in\mathbb{R}$; $A\in\mathbb{C}$. 

Three summations are re-grouped into two:
\begin{align}
    f(k) &= \infsum{h} \infsum{\alpha} \pdelta{k-(h+\frac{\alpha}{N})a} J_{\alpha}(kA)
\end{align}
The Dirac comb can be represented as a Fourier series:
\begin{align}
    \infsum{h} \delta(k-ha) = \frac{1}{a} \infsum{m} e^{-i2\pi k m/a}\label{eq:diracComb}
\end{align}
With (\ref{eq:diracComb}), $f(k)$ becomes:
\begin{align}
    f(k) = \frac{1}{a}\infsum{m} e^{-i2\pi k m /a} \infsum{\alpha} J_{\alpha}(kA)e^{i\alpha 2\pi m/N }
\end{align}
The summation over $\alpha$ appears in the Jacobi-Anger relation~\cite{Jacobi_1836, Anger_1855, Watson}:
\begin{align}
    e^{ix\sin(\theta)}=\infsum{\alpha} J_{\alpha}(x)e^{i\alpha\theta}\label{eq:jacobianger}
\end{align}
Using the Jacobi-Anger relation:
\begin{align}
    f(k) = \frac{1}{a}\infsum{m} e^{-i2\pi km/a} e^{ikA \sin(2\pi m/N)}
\end{align}
We split the summation over $m$ into $m = ..., Nn, Nn+1,..., Nn+(N-1), ...$ :
\begin{align}
    f(k) &= \frac{1}{a} \infsum{n} \bigg[... + e^{-i2\pi k nN/a}\nonumber\\
    &\;\qquad\qquad\qquad+e^{-i2\pi knN/a}e^{-i2\pi k/a}e^{ikA\psin{2\pi/N}}+...\nonumber\\
    &\;\qquad\qquad\qquad+e^{-i2\pi knN/a}e^{-i2\pi k(N-1)/a}e^{ikA\psin{2\pi(N-1)/N}}+... \bigg]\nonumber\\
    &=\frac{1}{a} \infsum{n}e^{-i2\pi knN/a} \sum_{q=1}^{N-1} e^{-i2\pi kq/a}e^{ikA\psin{2\pi q/N}} \label{eq:sumsplit}
\end{align}
Using (\ref{eq:diracComb}) again:
\begin{align}
    f(k) &= \frac{1}{N}\infsum{l} \pdelta{k-\frac{la}{N}}\sum_{q=0}^{1-N} e^{ikA\psin{2\pi q/N}}e^{-i2\pi kq/a}
\end{align}
Substitute $k = \frac{la}{N}; l\in\mathbb{Z}$, as $f$ is non-zero only where a delta function exists.
\begin{align}
    f(k) &= \frac{1}{N}\infsum{l} \pdelta{k-\frac{la}{N}}\sum_{q=0}^{N-1} e^{ikA\psin{2\pi q/N}}e^{-i2\pi ql/N }
\end{align}
We split the summation into $l=..., Nh, Nh+1, ..., (N+1)h-1 ...$ then regroup, similar with (\ref{eq:sumsplit}):
\begin{align}
    f(k) = \frac{1}{N}\infsum{h} \sum_{p=0}^{N-1} \pdelta{k-(h+\frac{p}{N})a}\sum_{q=0}^{N-1} e^{ikA\psin{2\pi q/N}}e^{-i2\pi pq/N} \label{eq:f_final}
\end{align}

Initial expression (\ref{eq:f_initial}) must equal (\ref{eq:f_final}):
\begin{align}
    \infsum{h}\sum_{p=0}^{N-1} &\delta(k-(h+\frac{p}{N})a)\; \infsum{\nu} J_{N\nu+p}(kA) \nonumber\\ &
    = \frac{1}{N}\infsum{h} \sum_{p=0}^{N-1} \pdelta{k-(h+\frac{p}{N})a}\sum_{q=0}^{N-1} e^{ikA\psin{2\pi q/N}}e^{-i2\pi pq/N}
\end{align}

This relation suggests only values on the Dirac comb are equivalent. However, the variable $a$ can take on any real value thus the expression holds for all values of $kA$. The equivalent Dirac lattices on each side can be disregarded.
\begin{align}
    \therefore\quad\infsum{\nu} J_{N\nu+p}(x) &= \frac{1}{N}\sum_{q=0}^{N-1} e^{ix\psin{2\pi q/N}}e^{-i2\pi pq/N}
\end{align}
\end{proof}

\pagebreak
\begin{corollary}
From Theorem 1 we comprise a table of closed form expressions:
\begin{center}
\footnotesize
\begin{tabular}{|| C | C | L | L c ||}
    \hline
    N & p    & \infsum{\nu} J_{N\nu+p}(x) & \frac{1}{N}\sum_{q=0}^{N-1} e^{ix\psin{2\pi q/N}}e^{-i2\pi pq/N}&\\
    \hline
    \hline
    1 & 0 & \infsum{\nu} J_\nu(x) & 1 & \cite{Watson,JeffreyZwillinger,kuzmin1933}\\
    \hline
    2 & 0 & \infsum{\nu} J_{2\nu}(x) & 1 &  \cite{Watson,JeffreyZwillinger}\\
      & 1 & \infsum{\nu} J_{2\nu+1}(x) & 0 &  \cite{Watson,JeffreyZwillinger}\\
    \hline
    3 & 0  & \infsum{\nu} J_{3\nu}   & \frac{1}{3} \bigg[1+2\pcos{\frac{x\sqrt{3}}{2}}\bigg] &\\
      & 1 & \infsum{\nu} J_{3\nu+1} & \frac{1}{3}\bigg[1+2\pcos{\frac{x\sqrt{3}}{2}-\frac{2\pi}{3}}\bigg] &\\ 
      & 2 & \infsum{\nu} J_{3\nu+2} & \frac{1}{3}\bigg[1+2\pcos{\frac{x\sqrt{3}}{2}-\frac{4\pi}{3}}\bigg] &\\ 
     \hline
    4 & 0 & \infsum{\nu} J_{4\nu}   & \cos^2(\frac{x}{2}) &\\
      & 1 & \infsum{\nu} J_{4\nu+1} & \frac{1}{2}\sin(x) &\\ 
      & 2 & \infsum{\nu} J_{4\nu+2} & \sin^2(\frac{x}{2}) &\\ 
      & 3 & \infsum{\nu} J_{4\nu+3} & -\frac{1}{2}\sin(x) &\\ 
     \hline
    5 & 0 & \infsum{\nu} J_{5\nu}    & \frac{1}{5}\bigg[1+2\pcos{x\psin{\frac{2\pi}{5}}}+2\pcos{x\psin{\frac{4\pi}{5}}}\bigg] &\\
      & 1  & \infsum{\nu} J_{5\nu+1} & \frac{1}{5} \bigg[1+2\pcos{x\psin{\frac{2\pi}{5}}-\frac{2\pi}{5}}+2\pcos{x\psin{\frac{4\pi}{5}}-\frac{4\pi}{5}}\bigg] &\\
      & 2  & \infsum{\nu} J_{5\nu+2} & \frac{1}{5} \bigg[1+2\pcos{x\psin{\frac{2\pi}{5}}-\frac{4\pi}{5}}+2\pcos{x\psin{\frac{4\pi}{5}}-\frac{8\pi}{5}}\bigg] &\\
      & 3  & \infsum{\nu} J_{5\nu+3} & \frac{1}{5} \bigg[1+2\pcos{x\psin{\frac{2\pi}{5}}-\frac{6\pi}{5}}+2\pcos{x\psin{\frac{4\pi}{5}}-\frac{12\pi}{5}}\bigg] &\\
      & 4  & \infsum{\nu} J_{5\nu+4} & \frac{1}{5} \bigg[1+2\pcos{x\psin{\frac{2\pi}{5}}-\frac{8\pi}{5}}+2\pcos{x\psin{\frac{4\pi}{5}}-\frac{16\pi}{5}}\bigg] &\\
    \hline
    6 & 0 & \infsum{\nu} J_{6\nu}    & \frac{1}{3}\bigg[1+2\pcos{\frac{x\sqrt{3}}{2}}\bigg] &\\
      & 1  & \infsum{\nu} J_{6\nu+1} & \frac{1}{\sqrt{3}}\psin{\frac{\sqrt{3}}{2}} &\\
      & 2  & \infsum{\nu} J_{6\nu+2} & \frac{1}{3} \bigg[1-\pcos{\frac{x\sqrt{3}}{2}}\bigg] &\\
      & 3  & \infsum{\nu} J_{6\nu+3} & 0 &\\
      & 4  & \infsum{\nu} J_{6\nu+4} & \frac{1}{3} \bigg[1-\pcos{\frac{x\sqrt{3}}{2}}\bigg] &\\
      & 5  & \infsum{\nu} J_{6\nu+5} & -\frac{1}{\sqrt{3}}\psin{\frac{\sqrt{3}}{2}} &\\
    \hline
\end{tabular}
\end{center}
\end{corollary}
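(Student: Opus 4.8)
The plan is to specialize the closed form of Theorem 1, namely $\infsum{\nu} J_{N\nu+p}(x) = \frac{1}{N}\sum_{q=0}^{N-1} e^{ix\psin{2\pi q/N}}e^{-i2\pi pq/N}$, to each pair $(N,p)$ appearing in the table and to reduce the finite sum of complex exponentials to the tabulated real expression. First I would exploit the reflection $q \mapsto N-q$. Since $\psin{2\pi(N-q)/N} = -\psin{2\pi q/N}$ and, because $p \in \mathbb{Z}$, $e^{-i2\pi p(N-q)/N} = e^{i2\pi pq/N}$, the summands indexed by $q$ and $N-q$ are complex conjugates. Their imaginary parts therefore cancel and each pair contributes $2\pcos{x\psin{2\pi q/N}-2\pi pq/N}$; in particular every entry of the table is manifestly real.

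Next I would separate the unpaired terms. The $q=0$ summand always equals $1$, and when $N$ is even the central term $q=N/2$ equals $e^{ix\psin{\pi}}e^{-i\pi p}=(-1)^p$. Collecting these observations yields the uniform real reduction
\[
\infsum{\nu} J_{N\nu+p}(x) = \frac{1}{N}\left[1 + \epsilon_N (-1)^p + 2\sum_{q=1}^{\lfloor (N-1)/2\rfloor}\pcos{x\psin{2\pi q/N}-2\pi pq/N}\right],
\]
where $\epsilon_N = 1$ for $N$ even and $\epsilon_N = 0$ for $N$ odd. This single identity already specializes to each row once the relevant sine values are inserted.

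The remaining step is a row-by-row substitution of the explicit values $\psin{2\pi q/N}$ --- for instance $\sqrt{3}/2$ when $N\in\{3,6\}$, the values $0,\pm1$ when $N=4$, and $\psin{2\pi/5}, \psin{4\pi/5}$ when $N=5$ --- followed by elementary simplification. I expect the only obstacle to be cosmetic rather than analytic: matching the compact entries requires the right identity for each modulus. The $N=4$ rows need the half-angle identities $\tfrac12(1+\pcos{x})=\cos^2\tfrac{x}{2}$ and $\tfrac12(1-\pcos{x})=\sin^2\tfrac{x}{2}$, while the $N=6$ rows with $p\in\{1,2,4,5\}$ need a sum-to-product identity --- e.g.\ for $p=1$, $\pcos{\theta-\pi/3}+\pcos{\theta-2\pi/3}$ with $\theta = x\sqrt{3}/2$ collapses to $\sqrt{3}\,\psin{\theta}$, and for $p=2$ the analogous pair collapses to $-\pcos{\theta}$. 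Since Theorem 1 has already delivered a finite closed form, no convergence or analytic issue remains and each table line follows by direct computation.
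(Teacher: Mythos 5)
Your proposal is correct and matches the paper's (implicit) approach: the corollary is stated without proof as a direct specialization of Theorem 1 to each pair $(N,p)$, which is exactly the row-by-row reduction you carry out, and your uniform pairing formula reproduces every table entry. Two small remarks: for $x\in\mathbb{C}$ the $q$ and $N-q$ summands are not literally complex conjugates, but their sum is still $2\cos\left(x\sin(2\pi q/N)-2\pi pq/N\right)$ by the identity $e^{iz}+e^{-iz}=2\cos z$, so the reduction survives unchanged; and your computation for $N=6$, $p\in\{1,5\}$ correctly yields $\pm\tfrac{1}{\sqrt{3}}\sin\left(\tfrac{x\sqrt{3}}{2}\right)$, which exposes a typographical omission of the factor $x$ inside the sine in those two table entries.
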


\pagebreak
\section{Evaluation of Alternating Series: $\displaystyle \infsum{\nu} (-1)^\nu J_{N\nu+p}(x)$}
\begin{theorem}
Infinite series of Bessel functions of the first kind of the form $\sum_\nu (-1)^\nu J_{N\nu+p}(x)$ where $\nu, p, q \in \mathbb{Z}, \ N \in \mathbb{Z}^+$ and $x\in\mathbb{C}$ have following closed expression:
\begin{align}
    \infsum{\nu} (-1)^\nu J_{N\nu+p}(x) =\frac{1}{N} \sum_{q=0}^{N-1}e^{ix\psin{\frac{(2q+1)\pi} {N}}}e^{-i\frac{(2q+1)\pi p}{N}}
\end{align}
\end{theorem}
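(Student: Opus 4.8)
The plan is to mirror the Dirac-comb argument of Theorem 1, inserting a single residue-dependent phase so that the alternating sign $(-1)^\nu$ is converted into a clean exponential in the combined index $\alpha=N\nu+p$. Concretely, I would define the auxiliary function
\begin{align*}
f(k) = \infsum{h}\sum_{p=0}^{N-1} e^{i\pi p/N}\,\pdelta{k-(h+\tfrac{p}{N})a}\,\infsum{\nu}(-1)^\nu J_{N\nu+p}(kA),
\end{align*}
with $k,a\in\mathbb{R}$ and $A\in\mathbb{C}$. The purpose of the prefactor $e^{i\pi p/N}$ is the identity $(-1)^\nu e^{i\pi p/N}=e^{i\pi(N\nu+p)/N}=e^{i\pi\alpha/N}$: it collapses the awkward factor $(-1)^\nu=(-1)^{\lfloor \alpha/N\rfloor}$, which is \emph{not} an exponential in $\alpha$, into the genuine exponential $e^{i\pi\alpha/N}$, which is exactly what the Jacobi--Anger relation can absorb. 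Without this compensation the regrouped series would carry the non-exponential factor $e^{-i\pi(\alpha\bmod N)/N}$ and Jacobi--Anger would not apply.

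With this prefactor the three sums regroup, exactly as in Theorem 1, into $f(k)=\sum_{h}\sum_\alpha e^{i\pi\alpha/N}\pdelta{k-(h+\alpha/N)a}J_\alpha(kA)$. Applying the Fourier representation of the Dirac comb \eqref{eq:diracComb} to the $h$-sum produces the displacement phase $e^{i2\pi\alpha m/N}$, so the total $\alpha$-phase becomes $e^{i\alpha(2m+1)\pi/N}$; the Jacobi--Anger relation \eqref{eq:jacobianger} then sums the $\alpha$-series to $e^{ikA\psin{(2m+1)\pi/N}}$, giving $f(k)=\tfrac1a\sum_m e^{-i2\pi km/a}e^{ikA\psin{(2m+1)\pi/N}}$. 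From here the argument is identical to Theorem 1: split $m=Nn+q$ (the sine depends only on $q$ since $\psin{(2(Nn+q)+1)\pi/N}=\psin{(2q+1)\pi/N}$), re-sum the resulting comb over $n$ via \eqref{eq:diracComb}, evaluate the residual exponentials on the delta lattice $k=la/N$, and split $l=Nh+p$ to recover a comb with the same support as $f$.

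Matching the coefficient of $\pdelta{k-(h+\tfrac{p}{N})a}$ on the two representations of $f$, and writing $x=kA$, I would obtain
\begin{align*}
e^{i\pi p/N}\infsum{\nu}(-1)^\nu J_{N\nu+p}(x) = \frac1N\sum_{q=0}^{N-1} e^{-i2\pi pq/N}\,e^{ix\psin{(2q+1)\pi/N}}.
\end{align*}
Dividing by $e^{i\pi p/N}$ merges the two $p$-phases, since $\tfrac{\pi p}{N}+\tfrac{2\pi pq}{N}=\tfrac{(2q+1)\pi p}{N}$, which is precisely the claimed identity.

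As in Theorem 1, the final and most delicate step is the justification that equality of the two Dirac-comb expressions forces equality of the Bessel series for every $x=kA$: because $a$ is a free real parameter, the common lattice can be made arbitrarily fine and may therefore be discarded. I expect the main obstacle to be bookkeeping rather than a new idea, namely ensuring the inserted phase $e^{i\pi p/N}$ is carried consistently through every regrouping so that it recombines at the end into the single factor $e^{-i(2q+1)\pi p/N}$, and verifying that the half-integer shift $m\mapsto(2m+1)$ in the Jacobi--Anger argument is exactly what encodes the alternating sign $(-1)^\nu$.
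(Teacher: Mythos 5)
Your proposal is correct and reaches the stated identity, and it shares the paper's overall strategy (a Dirac comb on the lattice $k=(h+p/N)a$, the Fourier representation \eqref{eq:diracComb}, Jacobi--Anger \eqref{eq:jacobianger}, and the split--regroup--match-coefficients endgame). Where you genuinely diverge is in how the alternating sign is absorbed. The paper keeps the comb unweighted and instead rewrites $(-1)^\nu$ as $(-1)^h$ after regrouping, then as $e^{-i(k-\frac{\alpha}{N}a)\pi/a}=e^{-i\pi k/a}e^{i\pi\alpha/N}$ on the delta support; the factor $e^{-i\pi k/a}$ shifts the output comb to half-integer frequencies $e^{-i\pi k(2m+1)/a}$, and a residual $(-1)^h$ resurfaces in the final comb \eqref{eq:g_final} before being cancelled against the initial expression. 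You instead weight the comb by $e^{i\pi p/N}$ at the outset so that $(-1)^\nu e^{i\pi p/N}=e^{i\pi\alpha/N}$ holds as an exact identity in the index $\alpha=N\nu+p$, independent of the delta support. This buys you two things: the regrouping of three sums into two is a literal term-by-term bijection with matching coefficients (whereas the paper's replacement of $(-1)^\nu$ by $(-1)^h$ changes the function by a lattice-site-dependent sign $(-1)^h$ that is only reconciled because the same factor reappears in \eqref{eq:g_final} and is dropped on both sides at the end), and your output comb stays at integer frequencies $e^{-i2\pi km/a}$, so the final step is a clean division by $e^{i\pi p/N}$ using $\frac{\pi p}{N}+\frac{2\pi pq}{N}=\frac{(2q+1)\pi p}{N}$. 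The cost is identical rigor elsewhere: like the paper, you still rely on the heuristic that equality of two distributions supported on an $a$-dependent lattice, for all $a$, forces pointwise equality of the coefficient functions, which is the genuinely delicate step in both arguments and which you correctly flag.
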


\begin{proof}
Consider the following Dirac comb on an infinite series of Bessel functions:
\begin{align}
    g(k) = \infsum{h}\sum_{p=0}^{N-1} \delta(k-(h+\frac{p}{N})a) \infsum{\nu} (-1)^\nu J_{N\nu+p}(kA) \label{eq:g_initial}
\end{align}
where $\nu, h, p \in \mathbb{Z}$; $N \in \mathbb{Z}^+$; $k, a\in\mathbb{R}$; $A\in\mathbb{C}$. 

Three summations are re-grouped into two:
\begin{align}
    g(k) &= \infsum{h}\infsum{\alpha} \pdelta{k-(h+\frac{\alpha}{N})a} (-1)^h J_{\alpha}(kA) \nonumber\\
    &= \infsum{h}\infsum{\alpha} \pdelta{k-(h+\frac{\alpha}{N})a} e^{-i(k-\frac{\alpha}{N}a)\frac{\pi}{a}} J_{\alpha}(kA) \nonumber\\
    &= \infsum{h}\infsum{\alpha} \pdelta{k-(h+\frac{\alpha}{N})a} e^{-i\pi k/a} e^{i\pi\alpha/N} J_{\alpha}(kA) 
\end{align}
The Dirac comb can be represented as a Fourier series (\ref{eq:diracComb}):
\begin{align}
    g(k) &= \frac{1}{a}\infsum{m}\infsum{\alpha} e^{-i2\pi (k-\frac{\alpha}{N}a)m/a} e^{-i\pi k/a} e^{i\pi\alpha/N} J_{\alpha}(kA) \nonumber\\
    &= \frac{1}{a}\infsum{m} e^{-i\pi k(2m+1)/a} \infsum{\alpha} e^{i\alpha \pi(2m+1)/N} J_{\alpha}(kA)
\end{align}
Using the Jacobi-Anger relation (\ref{eq:jacobianger}):
\begin{align}
    g(k) &= \frac{1}{a}\infsum{m} e^{-i \pi k(2m+1)/a}e^{ikA \psin{\pi (2m+1)/N}}
\end{align}
We split the summation over m into $m = ..., Nn, Nn+1, ..., Nn+(N-1), ...$ :
\begin{align}
    g(k) &=\frac{1}{N}\infsum{n} \bigg[...+e^{-i2\pi k nN/a}e^{-i\pi k/a}e^{ikA\psin{\frac{1\pi}{N}}}\nonumber\\
    &\qquad\qquad\; + e^{-i2\pi knN/a}e^{-i3\pi k/a}e^{ikA\psin{\frac{3\pi}{N}}}+...\nonumber\\
    &\qquad\qquad\; + e^{-i2\pi knN/a}e^{-i(2N-1)\pi k/a}e^{ikA\psin{\frac{(2N-1)\pi}{N}}}+... \bigg]\nonumber\\
    &=\frac{1}{a}\infsum{n} e^{-i2\pi knN/a}\sum_{q=0}^{N-1} e^{-i(2q+1)\pi k/a} e^{ikA\psin{\frac{(2q+1)\pi} {N}}}
\end{align}
Using (\ref{eq:diracComb}) again:
\begin{align}
    g(k) &= \frac{1}{N}\infsum{l} \pdelta{k-\frac{a}{N}l}\sum_{q=0}^{N-1}  e^{ikA\psin{\frac{(2q+1)\pi} {N}}}e^{-i(2q+1)\pi k/a}
\end{align}
Substitute $k = l\frac{a}{N}$, as $f$ is non-zero where a delta function exists.
\begin{align}
    g(k) &= \frac{1}{N}\infsum{l} \pdelta{k-\frac{a}{N}l}\sum_{q=0}^{N-1} e^{ikA\psin{\frac{(2q+1)\pi} {N}}}e^{-i\frac{(2q+1)\pi l}{N}}
\end{align}
We split the summation into $l=..., Nh, Nh+1, ..., Nh+(N-1),...$ then regroup:
\begin{align}
    g(k) = \frac{1}{N}\infsum{h} \sum_{p=0}^{N-1} \pdelta{k-(h+\frac{p}{N})a} (-1)^h \sum_{q=0}^{N-1}e^{ikA\psin{\frac{(2q+1)\pi} {N}}}e^{-i\frac{(2q+1)\pi p}{N}} \label{eq:g_final}
\end{align}

Initial expression (\ref{eq:g_initial}) must equal (\ref{eq:g_final}):
\begin{align}
     \infsum{h}\sum_{p=0}^{N-1} &\delta(k-(h+\frac{p}{N})a) \infsum{\nu} (-1)^\nu J_{N\nu+p}(kA) = \nonumber\\ 
     &\frac{1}{N}\infsum{h} \sum_{p=0}^{N-1} \pdelta{k-(h+\frac{p}{N})a} (-1)^h \sum_{q=0}^{N-1}e^{ikA\psin{\frac{(2q+1)\pi} {N}}}e^{-i\frac{(2q+1)\pi p}{N}}
\end{align}

The variable $a$ can take on any real value thus the expression holds for all values of $kA$. The equivalent Dirac lattices on each side can be disregarded.
\begin{align}
    \therefore\quad\infsum{\nu} (-1)^\nu J_{N\nu+p}(x) = \frac{1}{N} \sum_{q=0}^{N-1}e^{ix\psin{\frac{(2q+1)\pi} {N}}}e^{-i\frac{(2q+1)\pi p}{N}}
\end{align}
\end{proof}

\pagebreak
\begin{corollary}

From Theorem 2 we comprise a table of closed form expressions:
\begin{center}
\footnotesize
\begin{tabular}{|| C | C | L | L c ||}
    \hline
    N & p & \infsum{\nu} (-1)^\nu J_{N\nu+p}(x) & \frac{1}{N} \sum_{q=0}^{N-1}e^{ix\psin{\frac{(2q+1)\pi} {N}}}e^{-i\frac{(2q+1)\pi p}{N}}&\\
    \hline
    \hline
    1 & 0 & \infsum{\nu} (-1)^\nu J_{\nu}(x)    & 1 & \cite{Watson,JeffreyZwillinger} \\
    \hline
    2 & 0 & \infsum{\nu} (-1)^\nu J_{2\nu}(x)   & \pcos{x} &\cite{Watson}\\
      & 1 & \infsum{\nu} (-1)^\nu J_{2\nu+1}(x) & \psin{x} &\cite{Watson}\\
    \hline
    3 & 0 & \infsum{\nu} (-1)^\nu J_{3\nu}(x)   &  \;\;\, \frac{1}{3}\bigg[1+2\pcos{\frac{x\sqrt{3}}{2}}\bigg] &\\
      & 1 & \infsum{\nu} (-1)^\nu J_{3\nu+1}(x) &   -\frac{1}{3}\bigg[1-2\pcos{\frac{x\sqrt{3}}{2}-\frac{\pi}{3}}\bigg] &\\
      & 2 & \infsum{\nu} (-1)^\nu J_{3\nu+2}(x) &  \;\;\, \frac{1}{3}\bigg[1+2\pcos{\frac{x\sqrt{3}}{2}-\frac{2\pi}{3}}\bigg] &\\
    \hline
    4 & 0 & \infsum{\nu} (-1)^\nu J_{4\nu}(x)   & \pcos{\frac{x}{\sqrt{2}}}&\\
      & 1 & \infsum{\nu} (-1)^\nu J_{4\nu+1}(x) & \frac{1}{\sqrt{2}}\psin{\frac{x}{\sqrt{2}}} &\\
      & 2 & \infsum{\nu} (-1)^\nu J_{4\nu+2}(x) & 0 &\\
      & 3 & \infsum{\nu} (-1)^\nu J_{4\nu+3}(x) & \frac{1}{\sqrt{2}}\psin{\frac{x}{\sqrt{2}}} &\\
    \hline
    5 & 0 & \infsum{\nu} (-1)^\nu J_{5\nu}(x)   &  \;\;\, \frac{1}{5}\bigg[1+2\pcos{x\psin{\frac{\pi}{5}}}+2\pcos{x\psin{\frac{3\pi}{5}}}\bigg] &\\
      & 1 & \infsum{\nu} (-1)^\nu J_{5\nu+1}(x) &  -\frac{1}{5}\bigg[1+2\pcos{x\psin{\frac{\pi}{5}}+\frac{4\pi}{5}}+2\pcos{x\psin{\frac{3\pi}{5}}+\frac{2\pi}{5}}\bigg] &\\
      & 2 & \infsum{\nu} (-1)^\nu J_{5\nu+2}(x) & \;\;\, \frac{1}{5}\bigg[1+2\pcos{x\psin{\frac{\pi}{5}}+\frac{8\pi}{5}}+2\pcos{x\psin{\frac{3\pi}{5}}+\frac{4\pi}{5}}\bigg] &\\
      & 3 & \infsum{\nu} (-1)^\nu J_{5\nu+3}(x) &  -\frac{1}{5}\bigg[1+2\pcos{x\psin{\frac{\pi}{5}}+\frac{12\pi}{5}}+2\pcos{x\psin{\frac{3\pi}{5}}+\frac{6\pi}{5}}\bigg] &\\
      & 4 & \infsum{\nu} (-1)^\nu J_{5\nu+4}(x) & \;\;\, \frac{1}{5}\bigg[1+2\pcos{x\psin{\frac{\pi}{5}}+\frac{16\pi}{5}}+2\pcos{x\psin{\frac{3\pi}{5}}+\frac{8\pi}{5}}\bigg] &\\
    \hline
    6 & 0 & \infsum{\nu} (-1)^\nu J_{6\nu}(x)   & \;\;\,\frac{1}{3}\bigg[\pcos{x}+2\pcos{\frac{x}{2}}\bigg] &\\
      & 1 & \infsum{\nu} (-1)^\nu J_{6\nu+1}(x) & \;\;\,\frac{1}{3}\bigg[\psin{x}+\psin{\frac{x}{2}}\bigg] &\\
      & 2 & \infsum{\nu} (-1)^\nu J_{6\nu+2}(x) & -\frac{1}{3}\bigg[\pcos{x}-\pcos{\frac{x}{2}}\bigg] &\\
      & 3 & \infsum{\nu} (-1)^\nu J_{6\nu+3}(x) & -\frac{1}{3}\bigg[\psin{x}-2\psin{\frac{x}{2}}\bigg] &\\
      & 4 & \infsum{\nu} (-1)^\nu J_{6\nu+4}(x) & \;\;\,\frac{1}{3}\bigg[\pcos{x}-\pcos{\frac{x}{2}}\bigg] &\\
      & 5 & \infsum{\nu} (-1)^\nu J_{6\nu+5}(x) & \;\;\,\frac{1}{3}\bigg[\psin{x}+\psin{\frac{x}{2}}\bigg] &\\
    \hline
\end{tabular}
\end{center}
\end{corollary}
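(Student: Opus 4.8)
The plan is to obtain every entry of the table as a direct specialization of Theorem~2, so that no fresh analytic input is needed; the entire task reduces to evaluating the finite sum
\begin{align*}
    \infsum{\nu} (-1)^\nu J_{N\nu+p}(x) = \frac{1}{N}\sum_{q=0}^{N-1} e^{ix\psin{(2q+1)\pi/N}}\, e^{-i(2q+1)\pi p/N}
\end{align*}
for each listed pair $(N,p)$ with $1\le N\le 6$ and reducing the right-hand side to real elementary functions. The organizing device throughout is the reflection $q \mapsto N-1-q$ of the index set $\{0,1,\dots,N-1\}$, which I expect to collapse the complex exponentials into the cosines and sines that appear in the table.

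First I would record how this reflection acts on the two factors. Writing $\theta_q = (2q+1)\pi/N$, a one-line computation gives $\theta_{N-1-q} = 2\pi - \theta_q$, so that $\psin{\theta_{N-1-q}} = -\psin{\theta_q}$; and since $p \in \mathbb{Z}$, the relation $e^{-2\pi i p}=1$ forces the phase to satisfy $e^{-i(2(N-1-q)+1)\pi p/N} = e^{+i(2q+1)\pi p/N}$. Consequently, setting $w_q = x\psin{\theta_q} - (2q+1)\pi p/N$, the $q$-th summand equals $e^{iw_q}$ while the $(N-1-q)$-th summand equals $e^{-iw_q}$, so their sum is
\begin{align*}
    e^{iw_q} + e^{-iw_q} = 2\pcos{w_q},
\end{align*}
an identity valid for complex $x$ as well, since $\cos z = \tfrac12(e^{iz}+e^{-iz})$ holds on all of $\mathbb{C}$.

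Next I would split according to the parity of $N$. When $N$ is even every index pairs off with a distinct partner, leaving $N/2$ cosine terms; when $N$ is odd the central index $q=(N-1)/2$ is its own reflection, and there $\theta_q=\pi$ gives $\psin{\theta_q}=0$ and a standalone real contribution $e^{-i\pi p}=(-1)^p$, while the remaining $(N-1)/2$ pairs again yield cosines. This already reproduces the structural shape of each row: the odd cases $N=3,5$ exhibit the lone $(-1)^p$ term (the leading $\pm 1$ inside the brackets) together with paired cosines, whereas the even cases $N=2,4,6$ exhibit only cosine/sine combinations; vanishing entries such as $N=4,\,p=2$ arise precisely when the surviving phases cancel pairwise. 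I would then finish each entry by elementary rewriting, invoking half-angle and product-to-sum identities to bring the $N=4,6$ rows into the displayed form and to convert the cosine combinations into the tabulated sines.

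I do not anticipate a genuine conceptual obstacle here; the work is entirely the case-by-case trigonometric bookkeeping needed to match each listed algebraic form, including the chosen sign conventions (for example presenting an entry as $-\tfrac13[\,1-2\pcos{\cdots}\,]$ rather than $\tfrac13[-1+2\pcos{\cdots}\,]$). The one point worth flagging is that, because $x$ is permitted to be complex, the pairing $q\leftrightarrow N-1-q$ is not a literal complex-conjugation but rather the formal identification of $e^{\pm iw_q}$; this is harmless, as the collapse to $2\pcos{w_q}$ rests only on the entire-function identity for $\cos$ quoted above and is therefore uniform in $x\in\mathbb{C}$.
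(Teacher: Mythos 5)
Your proposal is correct and matches the paper's (implicit) derivation: the paper offers no computation for this corollary beyond direct specialization of Theorem~2, which is exactly what you carry out, and your reflection pairing $q \mapsto N-1-q$ with the fixed-point analysis for odd $N$ is a clean, verifiably correct way to organize the term-by-term reduction to the tabulated cosines and sines. I spot-checked several rows (e.g.\ $N=4$, $p=2$ vanishing via $\cos(x/\sqrt{2}-\pi/2)+\cos(x/\sqrt{2}-3\pi/2)=0$, and the $N=5$, $p=1$ sign flip from $\cos(\theta-\pi/5)=-\cos(\theta+4\pi/5)$) and they all come out as in the table.
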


In Corollary 1, 2  we have noted references to previous reported expressions for $N = 1, 2$ that are closely related. For $N = 1$ the expression is given using the generating function $e^{1/2(t-1/t)x} = \infsum{\nu} J_{\nu}(x) t^\nu$ when $t = 1$. For $N=2$, the expressions can be re-written such that $ \infsum{\nu} (-1)^\nu J_{2\nu}(x) = J_0(x) + 2\sum_{\nu=1}^\infty (-1)^\nu J_{2\nu}(x) = \cos{x}$ and
      $\infsum{\nu} (-1)^\nu J_{2\nu+1}(x) = \sum_{\nu=0}^\infty (-1)^\nu J_{2\nu+1}(x) = \sin{x}$ using the identity $J_{-\nu}(x) = (-1)^\nu J_{\nu}(x)$.


\providecommand{\bysame}{\leavevmode\hbox to3em{\hrulefill}\thinspace}
\providecommand{\MR}{\relax\ifhmode\unskip\space\fi MR }
\providecommand{\MRhref}[2]{%
  \href{http://www.ams.org/mathscinet-getitem?mr=#1}{#2}
}
\providecommand{\href}[2]{#2}

\end{document}